\numberwithin{equation}{section}
\theoremstyle{plain}
\newtheorem{thm}{Theorem}[section]
\newtheorem{lemma}{Lemma}[section]
\newtheorem{algorithm}{Algorithm}[section]
\theoremstyle{remark}
\newtheorem{remark}{Remark}[section]
\def\citeapos#1{\citeauthor{#1}'s (\citeyear{#1})}
\begin{document}

\begin{frontmatter}
\title{A new Monte Carlo sampling in Bayesian probit regression}
\runtitle{A new Monte Carlo sampling}
%\thankstext{T1}{Footnote to the title with the `thankstext' command.}

\begin{aug}
\author{\fnms{Yuzo} \snm{Maruyama}\thanksref{t1,m1}
\ead[label=e1]
{maruyama@csis.u-tokyo.ac.jp}}
\and
\author{\fnms{William, E.} \snm{Strawderman}
\thanksref{t2,m2}
\ead[label=e2]{straw@stat.rutgers.edu}}

\thankstext{t1}{This work was partially supported by KAKENHI \#23740067.}
\thankstext{t2}{This work was partially supported by a grant from the Simons Foundation (\#209035 to William Strawderman).}
\address{University of Tokyo\thanksmark{m1} and Rutgers University\thanksmark{m2} \\
\printead{e1,e2}}
\runauthor{Y. Maruyama and W. Strawderman}

%\affiliation{Some University and Another University}
\end{aug}

\begin{abstract}
We study probit regression from a Bayesian perspective and give an alternative
form for the posterior distribution when the prior distribution for the regression
parameters is the uniform distribution. This new form allows simple
Monte Carlo simulation of the posterior as opposed to MCMC simulation studied
in much of the literature and may therefore 
be more efficient computationally. 
We also provide alternative explicit expression for the first and second moments.
Additionally we provide analogous results for Gaussian priors.
%We also provide a similar development for normal priors. A simulation study
%indicates that this new method performs well in a simulation study and in an
%example study.
\end{abstract}

\begin{keyword}[class=AMS]
\kwd[Primary ]{62J12}
%\kwd{}
\kwd[; secondary ]{62F15}
\end{keyword}

\begin{keyword}
\kwd{Bayesian approach}
\kwd{probit regression}
\kwd{noninformative prior}
\kwd{Monte Carlo sampling}
\end{keyword}
\end{frontmatter}

\section{Introduction}
\label{sec:intro}
The analysis of binary response data is important 
in statistics and related areas including econometrics and biometrics.
The classical maximum likelihood method 
and inferences based on the associated asymptotic theory
is often not accurate for small sample sizes.
A Bayesian approach with respect to the non-informative
or flat prior is a worthy natural competitor.

We study this problem with the aim of providing an alternative
expression for the posterior distribution that allows simple Monte Carlo
simulation as opposed to the somewhat more involved MCMC methods
in much of the literature. (See e.g.~\cite{Albert-Chib-1993})
We also give explicit expressions for the first and second moments
of the regression parameters.
Additionally we provide analogous results for Gaussian priors.

Suppose that $n$ independent binary random variables 
$Y_1,\dots,Y_n$ are observed, where $Y_i=1$ 
with probability of success $p_i$. 
The $p_i$ are related to a set of
covariates that may be continuous or discrete. Define the probit
regression model as $p_i=\Phi(\bm{x}'_i\bm{\beta})$, $i=1,\dots,n$,
where $\bm{\beta}$ is a $p\times 1$ vector of unknown parameters,
$\bm{x}_i$ is a vector of known covariates, and $\Phi$ is the
standard Gaussian cumulative distribution function. 
Let 
\begin{equation*}\label{data_y}
\bm{Y}=(Y_1,\dots,Y_n)', \ \bm{y}=(y_1,\dots, y_n)'
\end{equation*}
and 
\begin{equation*}\label{data_x}
 \bm{X}=(\bm{x}_1,\dots,\bm{x}_n)'
\end{equation*} 
with full rank $p$.
Then the joint probability distribution of $\bm{y}$ is given by
\begin{equation}\label{joint_1}
\mathrm{Pr}(\bm{Y}=\bm{y}|\bm{\beta})= \prod_{i=1}^n \Phi(\bm{x}'_i\bm{\beta})^{y_i}\left[1-\Phi(\bm{x}'_i\bm{\beta})\right]^{1-y_i}.
\end{equation}

The posterior density with respect to the flat prior $\pi(\bm{\beta})=1$ is
 proportional to the joint density \eqref{joint_1}.
Because this posterior is somewhat intractable theoretically,
instead of pursuing analytic results, a variety of simulation algorithms
have been proposed for obtaining samples from the posterior distribution $\pi(\bm{\beta}|\bm{y})$.
To our knowledge, the default choice is the so called \citeapos{Albert-Chib-1993} sampler, which we now describe.

The computational scheme proceeds by introducing $n$ independent latent
variables $Z_1,\dots,Z_n$, where $Z_i\sim N(\bm{x}'_i\bm{\beta},1)$.
If we let $Y_i = I(Z_i > 0)$, then $Y_1,\dots,Y_n$ 
are independent Bernoulli with $p_i = P(Y_i = 1) = \Phi(\bm{x}'_i\bm{\beta})$.
Under the flat prior, the posterior density of $\bm{\beta}$ and 
$\bm{Z}=(Z_1,\dots,Z_n)$ given $\bm{y}=(y_1,\dots,y_n)$ is
\begin{equation*}\label{bz|y}
\begin{split}
& \pi(\bm{\beta},\bm{Z}|\bm{y}) \\
&=
\textstyle{\prod_{i=1}^n}\left\{I(Z_i>0)I(y_i=1)+I(Z_i\leq 0)I(y_i=0)\right\}
\phi(Z_i-\bm{x}'_i\bm{\beta})
\end{split} 
\end{equation*}
where $\phi$ is the standard Gaussian probability density function.
Since $  \pi(\bm{\beta}|\{\bm{y},\bm{z}\})$ is proportional to
\begin{equation*}\label{b|yz}
% \pi(\bm{\beta}|\{\bm{y},\bm{Z}\})=
\textstyle{\prod_{i=1}^n}
\phi(z_i-\bm{x}'_i\bm{\beta}),
\end{equation*}
we have
\begin{equation}\label{b|yz1}
 \bm{\beta}|\{\bm{y},\bm{z}\}\sim N_p((\bm{X}'\bm{X})^{-1}\bm{X}'\bm{z},(\bm{X}'\bm{X})^{-1}).
\end{equation}
Further it is clear that
\begin{equation}\label{z|by}
 z_i|\{y_i,\bm{\beta}\}\sim
\begin{cases}
 N_{+}(\bm{x}'_i\bm{\beta},1,0) & \mbox{ if }y_i=1,\\
 N_{-}(\bm{x}'_i\bm{\beta},1,0) & \mbox{ if }y_i=0,
\end{cases}
\end{equation}
where $N_{+}(\nu,1,0)$ and $N_{+}(\nu,1,0)$ are the Gaussian distributions with
mean $\nu$ and variance $1$ that are left-truncated and right-truncated at $0$,
respectively.
Based on \eqref{b|yz1} and \eqref{z|by}, 
the corresponding Gibbs sampler is derived
and $\pi(\bm{\beta}|\bm{y})$ is approximated.
For discussions of related computational techniques for probit regression,
see \cite{Marin-Robert-2007}.

In this paper, we push the theoretical analysis somewhat further
to provide an expression for the posterior 
that allows more direct simulation and thereby, 
a more efficient random sampler.
Let 
\begin{equation}\label{X_y}
 \bm{X}_{\bm{y}}=\left\{2\mathrm{diag}(\bm{y})-\bm{I}_n\right\}\bm{X},
\end{equation}
where $\mathrm{diag}(\bm{y})$ is the $n\times n$ matrix with $y_i$
as the $i$-th diagonal entry, and
which will be seen to be sufficient for the joint probability.
Also let $ \bm{\Psi}[\bm{X}_{\bm{y}}]$ be the projection matrix onto the orthogonal
complement of the column space of $\bm{X}_{\bm{y}}$, which is
\begin{equation}\label{Psi_intro}
 \bm{\Psi}[\bm{X}_{\bm{y}}]
=\bm{I}-\bm{X}_{\bm{y}}(\bm{X}'_{\bm{y}}\bm{X}_{\bm{y}})^{-1}\bm{X}'_{\bm{y}}.
\end{equation}
Then under the mild condition on $ \bm{\Psi}[\bm{X}_{\bm{y}}]$, 
 $\pi(\bm{\beta}|\bm{y})$ is shown to be
\begin{equation}
\pi(\bm{\beta}|\bm{y})=
\iint \pi(\bm{\beta}|s,\bm{u},\bm{y})\pi(s|\bm{u},\bm{y})\pi(\bm{u}|\bm{y})
dsd\bm{u}
\end{equation}
where the elements of this hierarchical structure are given by
\begin{equation}\label{mc_step_0_intro}
\begin{split}
 \pi(\bm{\beta}|s,\bm{u},\bm{y})&=N_p((\bm{X}'\bm{X})^{-1}\bm{X}'_{\bm{y}}s^{1/2}\bm{u},
(\bm{X}'\bm{X})^{-1}),\\
\pi(s|\bm{u},\bm{y}) &=\|\bm{\Psi}[\bm{X}_{\bm{y}}]\bm{u}\|^{-2}\chi_n^2, \\
\pi(\bm{u}|\bm{y})& \propto \|\bm{\Psi}[\bm{X}_{\bm{y}}]\bm{u}\|^{-n} \text{ on }\mathcal{S}_+^n,
%\pi(\bm{h})& \text{ uniform on }\mathcal{S}_+^n,
\end{split}
\end{equation}
and $\mathcal{S}_+^n$ is the unit hyper-sphere restricted to the positive orthant
given by
\begin{equation}\label{S_intro}
\mathcal{S}_+^n=\left\{\bm{h}: \|\bm{h}\|^2=h_1^2+\dots+h_n^2=1, \mbox{ and } h_i\geq 0, \ (i=1,\dots,n)\right\}.
\end{equation}
As seen in Remark \ref{rem:MC}, the hierarchical structure of the posterior distribution given by \eqref{mc_step_0_intro} 
enables direct Monte Carlo sample generation essentially based on Gaussian random samplers.
Further, the posterior mean of $\bm{\beta}$ 
has the closed form 
\begin{equation*}
\frac{2^{1/2}\Gamma(\{n+1\}/2)}{\Gamma(n/2)}
 (\bm{X}'_{\bm{y}}\bm{X}_{\bm{y}})^{-1}\bm{X}'_{\bm{y}}
\frac{\displaystyle E_{\bm{h}}
\left[\bm{h}\|\bm{\Psi}[\bm{X}_{\bm{y}}]\bm{h}\|^{-(n+1)}\right]}
{\displaystyle E_{\bm{h}}
\left[\|\bm{\Psi}[\bm{X}_{\bm{y}}]\bm{h}\|^{-n}\right]}
\end{equation*}
where 
$ \mathrm{E}_{\bm{h}}$ refers to the expectation with respect to the uniform 
distribution on $\mathcal{S}_+^n$.
%\begin{equation}
%%\begin{split}
%%& 
%(\bm{X}'_{\bm{y}}\bm{X}_{\bm{y}})^{-1}\bm{X}'_{\bm{y}}
%\bm{w}[\bm{X}_{\bm{y}}] %\\
%%& \quad =(\bm{X}'\bm{X})^{-1}\bm{X}'
%%\left(2\mathrm{diag}(\bm{y})-\bm{I}_n\right)\bm{w}[\bm{X}_{\bm{y}}] 
%%\end{split}
%\end{equation}
%where 
%\begin{equation}\label{intro_z}
% \bm{w}[\bm{X}_{\bm{y}}]=\frac{2^{1/2}\Gamma(\{n+1\}/2)}{\Gamma(n/2)}
%\frac{\displaystyle E_{\bm{h}}
%\left[\bm{h}\|\bm{\Psi}[\bm{X}_{\bm{y}}]\bm{h}\|^{-(n+1)}\right]}
%{\displaystyle E_{\bm{h}}
%\left[\|\bm{\Psi}[\bm{X}_{\bm{y}}]\bm{h}\|^{-n}\right]}.
%\end{equation}
%Clearly $ \left(2\mathrm{diag}(\bm{y})-\bm{I}_n\right)\bm{w}[\bm{X}_{\bm{y}}] $
%is interpreted as a response variable for Bayesian probit regression.
More generally, a closed form of any moment of the posterior distribution,
including the posterior variance, can also be expressed similarly.

This paper is organized as follows. In Section \ref{sec:polar},
we give a polar coordinate representation of the joint probability given by
\eqref{joint_1}. Using this representation, we develop
an alternative representation of the posterior distribution
in Section \ref{sec:post_mean}, which leads to more efficient direct simulation
based analyses.

\section{The polar coordinate representation of the joint probability}
\label{sec:polar}
The probability that $Y_i=1$ is given by
\begin{equation}\label{probit_1}
\begin{split}
\mathrm{Pr}(Y_i=1|\bm{\beta})&=\Phi(\bm{x}'_i\bm{\beta}) 
  =\int_{-\infty}^{\bm{x}'_i\bm{\beta}}
\frac{1}{(2\pi)^{1/2}}\exp(-t^2/2)dt \\
 & =\int_{0}^{\infty}
\frac{1}{(2\pi)^{1/2}}\exp\left(-\frac{(t-\bm{x}'_i\bm{\beta})^2}{2}\right)dt .
\end{split}
\end{equation}
Similarly we have
\begin{equation}\label{probit_-1}
 \mathrm{Pr}(Y_i=0|\bm{\beta})%=1-\Phi(\bm{x}'_i\bm{\beta}) 
=\Phi(-\bm{x}'_i\bm{\beta}) 
 =\int_{0}^{\infty}
\frac{1}{(2\pi)^{1/2}}\exp\left(-\frac{(t+\bm{x}'_i\bm{\beta})^2}{2}\right)dt .
\end{equation}
By \eqref{probit_1} and \eqref{probit_-1},
\begin{equation*}\label{probit_general}
\mathrm{Pr}(Y_i=y_i|\bm{\beta})
 =\int_{0}^{\infty}
\frac{1}{(2\pi)^{1/2}}\exp\left(-\frac{(t-\{2y_i-1\}\bm{x}'_i\bm{\beta})^2}{2}\right)dt .
\end{equation*}
Since $Y_1,\dots,Y_n$ are mutually independent, the joint probability is
\begin{equation}\label{joint_prob}
 \mathrm{Pr}(\bm{Y}=\bm{y}|\bm{\beta}) 
%=\prod_{i=1}^n  \mathrm{Pr}(Y_i=y_i|\bm{\beta}) \\
=\int_{\mathcal{R}^n_+} \frac{1}{(2\pi)^{n/2}}
\exp\left(-\frac{\|\bm{t}-\bm{X}_{\bm{y}}\bm{\beta}\|^2}{2}\right)d\bm{t} 
\end{equation}
where $ \bm{X}_{\bm{y}}=\left\{2\mathrm{diag}(\bm{y})-\bm{I}_n\right\}\bm{X}$,
$\bm{t}=(t_1,\dots,t_n)'$ and 
the range of integration is the positive orthant of $\mathcal{R}^n $
given by
\begin{equation}\label{range_t}
\mathcal{R}^n_+=\{\bm{t}| 0<t_i<\infty \ (1\leq i\leq n)\}.
\end{equation}
Note that by the presentation of \eqref{joint_prob},
$\bm{X}_{\bm{y}}$ is sufficient for the joint probability.

A polar coordinate representation of \eqref{joint_prob}
for $t_1,\dots,t_n$ is given by
\begin{equation}\label{polar}
\begin{split}
& \textstyle{t_1=s^{1/2}\cos\varphi_1, \
t_i=s^{1/2}\prod_{j=1}^{i}\sin\varphi_{j}\cos\varphi_i, \ (i=2,\dots, n-1),} \\
&\textstyle{t_n=s^{1/2}\prod_{j=1}^{n-1}\sin\varphi_{j} ,\qquad
 \bm{t}=s^{1/2}\bm{h}(\bm{\varphi})}
\end{split}
\end{equation}
where $\bm{\varphi}=(\varphi_1,\dots,\varphi_{n-1})'$.
The Jacobian is 
\begin{equation}\label{Jacob_polar}
\textstyle{ \mathrm{Jacobian}[\bm{t}\to(s,\bm{\varphi}')'] =2^{-1}s^{n/2-1}
\prod_{j=1}^{n-2}\left\{\sin\varphi_j\right\}^{n-1-j}}.
\end{equation}
From \eqref{range_t}, 
the range of $\bm{\varphi}$, $\mathrm{R}(\bm{\varphi})$,
is given by
\begin{equation*}\label{range_varphi}
 0<\varphi_i<\pi/2 \ (i=1,\dots,n-1).
\end{equation*}
Therefore we have
\begin{equation}\label{joint_prob_polar}
\mathrm{Pr}(\bm{Y}=\bm{y}|\bm{\beta}) 
=\frac{1}{2(2\pi)^{n/2}}
\int_{\mathrm{R}(\bm{\varphi})} m(\bm{y}|\bm{\beta},\bm{h}(\bm{\varphi}))
\prod_{j=1}^{n-2}\left\{\sin\varphi_j\right\}^{n-1-j} d\bm{\varphi}
\end{equation}
where
\begin{equation*}\label{m_0}
m(\bm{y}|\bm{\beta},\bm{h}(\bm{\varphi}))
=\int_{0}^\infty s^{n/2-1}
\exp\left(-\frac{\|s^{1/2}\bm{h}(\bm{\varphi})-\bm{X}_{\bm{y}}\bm{\beta}\|^2}{2}\right) ds.\end{equation*}
Note that
\begin{equation*}\label{beta_sin}
\int_0^{\pi/2}\left\{\sin\varphi_j\right\}^{n-1-j} d\varphi
=\frac{B(1/2,\{n-j\}/2)}{2}
=\frac{\pi^{1/2}\Gamma(\{n-j\}/2)}{2\Gamma(\{n-j+1\}/2)},
\end{equation*}
and that
\begin{equation*}\label{beta_sin_1}
\int_{\mathrm{R}(\bm{\varphi})} \prod_{j=1}^{n-2}
\left\{\sin\varphi_j\right\}^{n-1-j} d\bm{\varphi} 
=\frac{\pi^{n/2}}{2^{n}\Gamma(n/2)}=c_1(n).
\end{equation*}
Therefore the joint probability is given by
\begin{equation}\label{joint_prob_polar_1}
\mathrm{Pr}(\bm{Y}=\bm{y}|\bm{\beta}) 
=\frac{c_1(n)}{2(2\pi)^{n/2}}
\mathrm{E}_{\bm{h}}\left[m(\bm{y}|\bm{\beta},\bm{h}) \right].
\end{equation}
In \eqref{joint_prob_polar_1}, 
$ \mathrm{E}_{\bm{h}}$ refers to the expectation with respect to the distribution of
$\bm{h}=(h_1,\dots,h_n)'$, which is uniformly distributed on $\mathcal{S}_+^n$, given in
\eqref{S_intro}, the unit hyper-sphere restricted to the positive orthants.

\section{Posterior inference with respect to the flat prior}
\label{sec:post_mean}
In this section, we consider posterior inference with respect to the flat prior.
The posterior distribution is given by
\begin{equation}\label{posterior_distribution_1}
\pi(\bm{\beta}|\bm{y})= \frac
{E_{\bm{h}}[m(\bm{y}|\bm{\beta},\bm{h})] }
{E_{\bm{h}}[\int_{\mathcal{R}^p}m(\bm{y}|\bm{\beta},\bm{h})
d\bm{\beta}] }.
\end{equation}
First we give a hierarchical structure for the posterior distribution which 
enables simple and efficient Monte Carlo sample generation.
Recall that $ \bm{\Psi}[\bm{X}_{\bm{y}}]$ given in \eqref{Psi_intro}
is the projection matrix to the orthogonal
complement of the column space of $\bm{X}_{\bm{y}}$.

For posterior inference, propriety of posteriors 
has been studied by many. % including Speckman et al.~(2009). 
\cite{Speckman-Jaeyong-Sun-2009} showed that the posterior distribution with respect to
the flat prior is proper 
if and only if 
%for any choice of $\bm{\beta}$, there is an
%$\bm{x}_i$ such that 
%\begin{equation}
% (2y_i-1)\bm{x}'_i\bm{\beta}<0.
%\end{equation}
%Or equivalently, 
there does not exist $\bm{\beta}$ such that
\begin{equation}
 (2\mathrm{diag}(\bm{y})-\bm{I}_n)\bm{X}\bm{\beta}=\bm{X}_{\bm{y}}\bm{\beta}\in \mathcal{R}^n_+.
\end{equation}
As seen in Lemma \ref{lem:equiv} below, 
this is equivalent to 
the non-existence of $\bm{u}\in \mathcal{S}_+^n$ 
such that $ \bm{\Psi}[\bm{X}_{\bm{y}}]\bm{u}=\bm{0}$.
So $\pi(s|\bm{u},\bm{y})$ and $\pi(\bm{u}|\bm{y})$ below are well-defined.
\begin{thm}\label{thm:main}
Assume 
there does not exist $\bm{u}\in \mathcal{S}_+^n$
such that $ \bm{\Psi}[\bm{X}_{\bm{y}}]\bm{u}=\bm{0}$. 
Then $\pi(\bm{\beta}|\bm{y})$ is given by 
\begin{equation}
\pi(\bm{\beta}|\bm{y})=
\iint \pi(\bm{\beta}|s,\bm{u},\bm{y})\pi(s|\bm{u},\bm{y})\pi(\bm{u}|\bm{y})
dsd\bm{u}
\end{equation}
where the elements of this hierarchical structure are given by
\begin{equation}\label{mc_step_0}
\begin{split}
 \pi(\bm{\beta}|s,\bm{u},\bm{y})&=N_p((\bm{X}'\bm{X})^{-1}\bm{X}'_{\bm{y}}s^{1/2}\bm{u},
(\bm{X}'\bm{X})^{-1}),\\
\pi(s|\bm{u},\bm{y}) &=\|\bm{\Psi}[\bm{X}_{\bm{y}}]\bm{u}\|^{-2}\chi_n^2, \\
\pi(\bm{u}|\bm{y})& \propto \|\bm{\Psi}[\bm{X}_{\bm{y}}]\bm{u}\|^{-n} \text{ on }\mathcal{S}_+^n.
\end{split}
\end{equation}
\end{thm}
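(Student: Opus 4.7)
The plan is to start from the polar-coordinate form of the joint likelihood established in Section \ref{sec:polar} and expose the hierarchy by completing the square in $\bm{\beta}$ inside $m(\bm{y}\mid\bm{\beta},\bm{h})$; the three claimed conditionals then appear as the Gaussian, Gamma, and spherical factors of the resulting product, once the integrations in $\bm{\beta}$ and $s$ are carried out in sequence.

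First I would record the identity $\bm{X}_{\bm{y}}'\bm{X}_{\bm{y}} = \bm{X}'\bm{X}$, which follows at once from \eqref{X_y} because $(2y_i-1)^2 = 1$. Using $\|\bm{h}\|=1$ on $\mathcal{S}_+^n$ and expanding,
\begin{equation*}
\|s^{1/2}\bm{h}-\bm{X}_{\bm{y}}\bm{\beta}\|^2 = s - 2s^{1/2}\bm{h}'\bm{X}_{\bm{y}}\bm{\beta} + \bm{\beta}'\bm{X}'\bm{X}\bm{\beta}.
\end{equation*}
Setting $\bm{\mu}(s,\bm{h}) := (\bm{X}'\bm{X})^{-1}\bm{X}_{\bm{y}}'s^{1/2}\bm{h}$ and completing the square then yields
\begin{equation*}
\|s^{1/2}\bm{h}-\bm{X}_{\bm{y}}\bm{\beta}\|^2 = s\|\bm{\Psi}[\bm{X}_{\bm{y}}]\bm{h}\|^2 + (\bm{\beta}-\bm{\mu}(s,\bm{h}))'\bm{X}'\bm{X}(\bm{\beta}-\bm{\mu}(s,\bm{h})),
\end{equation*}
since the leftover $s\bigl(1-\bm{h}'\bm{X}_{\bm{y}}(\bm{X}'\bm{X})^{-1}\bm{X}_{\bm{y}}'\bm{h}\bigr)$ is exactly $s\|\bm{\Psi}[\bm{X}_{\bm{y}}]\bm{h}\|^2$ by \eqref{Psi_intro}.

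Substituting this into the definition of $m(\bm{y}\mid\bm{\beta},\bm{h})$, the integrand in the numerator of \eqref{posterior_distribution_1} factors as a product of three pieces, each depending recognizably on a single one of $\bm{\beta}$, $s$, or $\bm{h}$. The $\bm{\beta}$-factor $\exp(-(\bm{\beta}-\bm{\mu})'\bm{X}'\bm{X}(\bm{\beta}-\bm{\mu})/2)$ is proportional to the $N_p(\bm{\mu}(s,\bm{h}),(\bm{X}'\bm{X})^{-1})$ density, matching $\pi(\bm{\beta}\mid s,\bm{u},\bm{y})$; integrating it out contributes only the constant $(2\pi)^{p/2}|\bm{X}'\bm{X}|^{-1/2}$. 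The $s$-factor $s^{n/2-1}\exp(-s\|\bm{\Psi}[\bm{X}_{\bm{y}}]\bm{h}\|^2/2)$ is the unnormalized density of $\|\bm{\Psi}[\bm{X}_{\bm{y}}]\bm{h}\|^{-2}\chi_n^2$, matching $\pi(s\mid\bm{u},\bm{y})$; integrating it out contributes $\Gamma(n/2)\bigl(2/\|\bm{\Psi}[\bm{X}_{\bm{y}}]\bm{h}\|^2\bigr)^{n/2}$, so the residual marginal for $\bm{h}$ is proportional to $\|\bm{\Psi}[\bm{X}_{\bm{y}}]\bm{h}\|^{-n}$ on $\mathcal{S}_+^n$, i.e. $\pi(\bm{u}\mid\bm{y})$. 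All surviving multiplicative constants cancel against the denominator of \eqref{posterior_distribution_1}, giving the stated decomposition.

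The only analytic subtlety is that the three conditionals actually be genuine probability densities; the sole nontrivial point is that $\|\bm{\Psi}[\bm{X}_{\bm{y}}]\bm{u}\|^{-n}$ be integrable over $\mathcal{S}_+^n$. Since $\mathcal{S}_+^n$ is compact and $\bm{u}\mapsto\|\bm{\Psi}[\bm{X}_{\bm{y}}]\bm{u}\|$ is continuous, the hypothesis that it never vanishes on $\mathcal{S}_+^n$ forces a strictly positive lower bound, so $\|\bm{\Psi}[\bm{X}_{\bm{y}}]\bm{u}\|^{-n}$ is bounded and thus trivially integrable (which also makes $\pi(s\mid\bm{u},\bm{y})$ a bona fide Gamma density). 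I do not anticipate any serious obstacle beyond this check; the main practical care is bookkeeping the multiplicative constants across the two integrations, but since each one is absorbed into the overall normalizer of $\pi(\bm{\beta}\mid\bm{y})$ it requires no further tracking.
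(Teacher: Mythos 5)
Your proposal is correct and follows essentially the same route as the paper: the identity $\bm{X}_{\bm{y}}'\bm{X}_{\bm{y}}=\bm{X}'\bm{X}$, the completion of squares \eqref{complete_squares}, and the recognition of the Gaussian, scaled-$\chi^2_n$, and $\|\bm{\Psi}[\bm{X}_{\bm{y}}]\bm{u}\|^{-n}$ factors whose normalizing constants cancel in the ratio \eqref{posterior_distribution_1}. Your explicit compactness argument for the integrability of $\|\bm{\Psi}[\bm{X}_{\bm{y}}]\bm{u}\|^{-n}$ on $\mathcal{S}_+^n$ is a welcome tightening of the paper's ``provided all integrals exist'' caveat, but it does not change the substance of the argument.
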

\begin{proof}%[Proof of Theorem \ref{thm:main}]
Note $ \bm{X}'_{\bm{y}}\bm{X}_{\bm{y}}=\bm{X}'\bm{X}$. 
Since
\begin{equation}\label{complete_squares}
 \|s^{1/2}\bm{h}-\bm{X}_{\bm{y}}\bm{\beta}\|^2 = 
s\|\bm{\Psi}[\bm{X}_{\bm{y}}]\bm{h}\|^2  +
(\bm{\beta}-\hat{\bm{\beta}})'
\bm{X}'\bm{X}
(\bm{\beta}-\hat{\bm{\beta}}),
\end{equation}
where
\begin{equation*}
 \hat{\bm{\beta}}=s^{1/2}(\bm{X}'\bm{X})^{-1}\bm{X}'_{\bm{y}}\bm{h},
\end{equation*}
we have
\begin{equation}
 \begin{split}
& E_{\bm{h}}[m(\bm{y}|\bm{\beta},\bm{h})]
=E_{\bm{h}}\left[\int_0^\infty s^{n/2-1}
\exp\left(-s\frac{\|\bm{\Psi}[\bm{X}_{\bm{y}}]\bm{h}\|^2}{2}\right)\right.
   \\
&\qquad \left.\times \exp\left(-\frac{
(\bm{\beta}-\hat{\bm{\beta}})'\bm{X}'\bm{X}
(\bm{\beta}-\hat{\bm{\beta}})}{2}\right)ds \right] \\
&= \frac{(2\pi)^{p/2}}{|\bm{X}'\bm{X}|^{1/2}}2^{n/2}\Gamma(n/2)E_{\bm{h}}
\left[\frac{1}{\|\bm{\Psi}[\bm{X}_{\bm{y}}]\bm{h}\|^{n}} \right.\\
&\quad \times \int_0^\infty \frac{\|\bm{\Psi}[\bm{X}_{\bm{y}}]\bm{h}\|^{n}}{2^{n/2}\Gamma(n/2)}s^{n/2-1}\exp\left(-s\frac{\|\bm{\Psi}[\bm{X}_{\bm{y}}]\bm{h}\|^2}{2}\right) \\
& \qquad \left.
\times \frac{|\bm{X}'\bm{X}|^{1/2}}{(2\pi)^{p/2}}\exp\left(-\frac{
(\bm{\beta}-\hat{\bm{\beta}})'\bm{X}'\bm{X}
(\bm{\beta}-\hat{\bm{\beta}})}{2}\right)ds \right]
\end{split}
\end{equation}
provided all integrals exist. 
Let 
\begin{equation}
 \pi(\bm{u}|\bm{y})=\frac{\|\bm{\Psi}[\bm{X}_{\bm{y}}]\bm{u}\|^{-n}}
{\int_{\bm{u}\in\mathcal{S}^n_+}\|\bm{\Psi}[\bm{X}_{\bm{y}}]\bm{u}\|^{-n}d\bm{u} }.
\end{equation}
Then 
\begin{equation}
\begin{split}
 E_{\bm{h}}[m(\bm{y}|\bm{\beta},\bm{h})]&=
\frac{(2\pi)^{p/2}2^{n/2}\Gamma(n/2)}{|\bm{X}'\bm{X}|^{1/2}}
\int_{\bm{u}\in\mathcal{S}^n_+}\|\bm{\Psi}[\bm{X}_{\bm{y}}]\bm{u}\|^{-n}d\bm{u} \\
& \quad \times\int_0^\infty\int_{\mathcal{S}_+^n}  
 \pi(\bm{\beta}|s,\bm{u},\bm{y})\pi(s|\bm{u},\bm{y})\pi(\bm{u}|\bm{y})ds d\bm{u}.
\end{split}
\end{equation}
Hence the theorem follows.
\end{proof}
\begin{remark}\label{rem:MC}
Let $Z_1,\dots,Z_n$ be independently distributed $N(0,1)$.
Then $ t=\sum_{i=1}^n Z^2_i\sim \chi_n^2$ and
$ \bm{h}=(|Z_1|/t^{1/2},\dots,|Z_n|/t^{1/2})'$
is uniformly distributed on $\mathcal{S}_+^n$ which is independent of $t$.
Using this property, we can propose the following algorithm.
The so-called SIR (Sampling/Importance Resampling) method, 
described in part 2 and 3 below,
enables to obtain samples from $\pi(\bm{u}|\bm{y})$ based on samples from 
the uniform distribution on $\mathcal{S}_+^n$.
%we can propose the following algorithm.
%Let $s=t/\|\bm{\Psi}[\bm{X}_{\bm{y}}]\bm{h}\|^2$.
%Hence $ (s, \bm{h})$ has the joint distribution $\pi(s|\bm{h},\bm{y})\pi(\bm{h}|\bm{y})$
%given by \eqref{mc_step_0}.
%Note $s^{1/2}\bm{h}$ is written as
%\begin{equation*}
% s^{1/2}\bm{h}=
%\frac{t^{1/2}}{\|\bm{\Psi}[\bm{X}_{\bm{y}}]\bm{h}\|}
%\bm{h}
%=
%\frac{\|t^{1/2}\bm{h}\|}{\|\bm{\Psi}[\bm{X}_{\bm{y}}]\{t^{1/2}\bm{h}\}\|}
%\{t^{1/2}\bm{h}\}.
%\end{equation*}
\end{remark}
\begin{algorithm}[sampling from the posterior distribution $\pi(\bm{\beta}|\bm{y})$]\mbox{}\label{alg:sampling}
\begin{enumerate}
\item For $i=1,\dots,N$
\begin{enumerate}
\item Generate $n$ standard Normal random samples $z_1^{(i)},\dots,z_n^{(i)}$.
\item Compute $t^{(i)}=\sum_{j=1}^n \{z_j^{(i)}\}^2$ and
$\bm{h}^{(i)}=\{t^{(i)}\}^{-1/2}(|z_1^{(i)}|,\dots,|z_n^{(i)}|)'$.
%Take the absolute value of them $u_i=|z_i| $ for $i=1,\dots,n$.
\item Compute $v^{(i)}=1/\|\bm{\Psi}[\bm{X}_{\bm{y}}]\bm{h}^{(i)}\| $.
\end{enumerate}
\item Re-sample $i_1,\dots,i_{M}$ with replacement from the multinomial distribution
with probabilities 
\begin{equation*}
\mathrm{Pr}(J=i)= \frac{\{v^{(i)}\}^{n}}{\sum_{i=1}^{N} \{v^{(i)}\}^{n}}, \ i=1,\dots,N.
\end{equation*}
\item Let $ \bm{u}^{(k)}=\bm{h}^{(i_k)}$ and $w^{(k)}=v^{(i_k)}$ for $k=1,\dots,M$.
\item For $k=1,\dots,M$
\begin{enumerate}
\item Generate $p$ standard Normal random samples $z_1^{(k)},\dots,z_p^{(k)}$.
\item Compute
\begin{equation*}
 \bm{\beta}^{(k)}=w^{(k)}\sqrt{t^{(k)}}(\bm{X}'\bm{X})^{-1}\bm{X}'_{\bm{y}}\bm{u}^{(k)}+
(\bm{X}'\bm{X})^{-1/2}(z_{1}^{(k)},\dots,z_{p}^{(k)})'.
\end{equation*}
\end{enumerate}
%\item $\beta_i=u_i/v+\sum_{k=1}^p sigma_$
%\item Repeat $1$-$5$.
\end{enumerate}
\end{algorithm}

Therefore, in order to obtain samples from \eqref{mc_step_0},
it suffices to generate the Gaussian random samples.
The simplicity and directness of this method gives an advantage over the methods in much of the literature.
In addition to the simple structure described above, we note that, generally speaking,
Monte Carlo sampling is  more efficient than
MCMC sampling which has been utilized in this area.

\begin{remark}
When interest lies primary in 
the posterior moments, a closed form of any posterior moments with respect to
the flat prior is available.
For example, the posterior mean is given by
\begin{equation}\label{eq:posterior_mean_part_0}
 \mathrm{E}[\bm{\beta}|\bm{y}]
=\frac{\int \bm{\beta}\pi(\bm{\beta}|\bm{y})d\bm{\beta}}
{\int \pi(\bm{\beta}|\bm{y})d\bm{\beta}}
=(\bm{X}'\bm{X})^{-1}\bm{X}'_{\bm{y}}\mathrm{E}_{s,\bm{u}}[s^{1/2}\bm{u}|\bm{y}]
\end{equation}
where $\mathrm{E}[s^{1/2}\bm{u}|\bm{y}]$ may be written as
\begin{equation} \label{eq:posterior_mean_part_1}
\mathrm{E}_{\bm{u}}\left[\mathrm{E}_{s|\bm{u}}
\left[s^{1/2}|\bm{u},\bm{y}\right]\bm{u}\right] 
=\frac{2^{1/2}\Gamma(\{n+1\}/2)}{\Gamma(n/2)}
\frac{\displaystyle E_{\bm{h}}
\left[\bm{h}\|\bm{\Psi}[\bm{X}_{\bm{y}}]\bm{h}\|^{-(n+1)}\right]}
{\displaystyle E_{\bm{h}}
\left[\|\bm{\Psi}[\bm{X}_{\bm{y}}]\bm{h}\|^{-n}\right]}.
\end{equation}
The posterior variance is
\begin{equation}\label{eq:pos_var}
\begin{split}
& \mathrm{Var}[\bm{\beta}|\bm{y}]=\mathrm{E}\left[(\bm{\beta}-\mathrm{E}[\bm{\beta}|\bm{y}])
(\bm{\beta}-\mathrm{E}[\bm{\beta}|\bm{y}])'|\bm{y}\right] \\
&  
= \mathrm{E}\left[\bm{\beta}\bm{\beta}'|\bm{y}\right]-\mathrm{E}[\bm{\beta}|\bm{y}]\mathrm{E}[\bm{\beta}|\bm{y}]'\\
&=\mathrm{E}_{s,\bm{u}}\left[\mathrm{E}\left[\bm{\beta}\bm{\beta}'|s,\bm{u},\bm{y}\right]
\right]-\mathrm{E}[\bm{\beta}|\bm{y}]\mathrm{E}[\bm{\beta}|\bm{y}]' \\
&=(\bm{X}'\bm{X})^{-1}+
%\mathrm{E}_{s,\bm{u}}\left[\mathrm{Var}\left[\bm{\beta}|s,\bm{u},\bm{y}\right]\right] +
\mathrm{E}_{s,\bm{u}}\left[\mathrm{E}\left[\bm{\beta}|s,\bm{u},\bm{y}\right]
\mathrm{E}\left[\bm{\beta}|s,\bm{u},\bm{y}\right]'\right] 
 -\mathrm{E}[\bm{\beta}|\bm{y}]\mathrm{E}[\bm{\beta}|\bm{y}]' 
%&=(\bm{X}'\bm{X})^{-1}
%+\mathrm{E}_{s,\bm{h}}\left[s(\bm{X}'\bm{X})^{-1}\bm{X}'_{\bm{y}}\bm{h}\bm{h}'\bm{X}_{\bm{y}}(\bm{X}'\bm{X})^{-1}
%|s,\bm{h},\bm{y}\right] \\& \qquad \qquad -\mathrm{E}[\bm{\beta}|\bm{y}]\mathrm{E}[\bm{\beta}|\bm{y}]' \\
%&=(\bm{X}'\bm{X})^{-1}+n
%(\bm{X}'\bm{X})^{-1}\bm{X}'_{\bm{y}}
%\frac{\displaystyle \mathrm{E}_{\bm{h}}
%\left[\frac{\bm{h}\bm{h}'}{\|\bm{\Psi}[\bm{X}_{\bm{y}}]\bm{h}\|^{n+2}}\right]}
%{\displaystyle \mathrm{E}_{\bm{h}}
%\left[\|\bm{\Psi}[\bm{X}_{\bm{y}}]\bm{h}\|^{-n}\right]}
%\bm{X}_{\bm{y}}(\bm{X}'\bm{X})^{-1} \\
%& \qquad \qquad -\mathrm{E}[\bm{\beta}|\bm{y}]\mathrm{E}[\bm{\beta}|\bm{y}]'
\end{split}
\end{equation}
where the second term of the right-hand side of \eqref{eq:pos_var} is re-expressed as
\begin{equation}\label{eq:2_posterior_var}
\begin{split}
& \mathrm{E}_{s,\bm{u}}\left[\mathrm{E}\left[\bm{\beta}|s,\bm{u},\bm{y}\right]
\mathrm{E}\left[\bm{\beta}|s,\bm{u},\bm{y}\right]'\right] \\
&=\mathrm{E}_{s,\bm{u}}\left[s(\bm{X}'\bm{X})^{-1}\bm{X}'_{\bm{y}}\bm{u}\bm{u}'\bm{X}_{\bm{y}}(\bm{X}'\bm{X})^{-1}|s,\bm{u},\bm{y}\right] \\
&=n
(\bm{X}'\bm{X})^{-1}\bm{X}'_{\bm{y}}
\frac{\displaystyle \mathrm{E}_{\bm{h}}
\left[\bm{h}\bm{h}'\|\bm{\Psi}[\bm{X}_{\bm{y}}]\bm{h}\|^{-(n+2)}\right]}
{\displaystyle \mathrm{E}_{\bm{h}}
\left[\|\bm{\Psi}[\bm{X}_{\bm{y}}]\bm{h}\|^{-n}\right]}
\bm{X}_{\bm{y}}(\bm{X}'\bm{X})^{-1}.
\end{split} 
\end{equation}
Compared to the sample mean and sample variance of
simulated samples by Algorithm \ref{alg:sampling}, %from the posterior distribution, 
these expressions with closed forms given in 
\eqref{eq:posterior_mean_part_0},\eqref{eq:posterior_mean_part_1}, 
\eqref{eq:pos_var} and \eqref{eq:2_posterior_var} 
should be more useful and efficient, where samples $\bm{h}^{(i)}$ for $i=1,\dots,N$
in Algorithm \ref{alg:sampling} are sufficient.

In order to give an explicit expression of higher order moments,
the first step is to take the expectation of the functions of $\bm{\beta}$
given $s$ and $\bm{u}$,
as in the third equation of the right-hand side of \eqref{eq:pos_var}.
These are the moments of multivariate Normal distribution.
The second step is take the expectation of the function of $s$ given $\bm{u}$.
These are the moments of $\chi^2$ distribution.
As a result, the expression of the moments while complicated, are still exact.

Further we note that 
the posterior moments in the above have a kind of equivariant property.
Suppose that the scale of each covariate changes as $\bm{XD}$ with 
a $p\times p$ positive diagonal matrix $\bm{D}$.
Since
\begin{equation}
 \begin{split}
&  \bm{\Psi}[\{\bm{XD}\}_{\bm{y}}]=\bm{\Psi}[\bm{X}_{\bm{y}}], \\
& (\{\bm{XD}\}_{\bm{y}}'\{\bm{XD}\}_{\bm{y}})^{-1}\{\bm{XD}\}_{\bm{y}}'
=\bm{D}^{-1}(\bm{X}'\bm{X})^{-1}\bm{X}'_{\bm{y}},
 \end{split}
\end{equation}
the posterior means satisfy the desirable property
\begin{equation}
 \mathrm{E}[\bm{\beta}|\{\bm{XD}\}_{\bm{y}}]=
\bm{D}^{-1}\mathrm{E}[\bm{\beta}|\bm{X}_{\bm{y}}].
\end{equation}
In the same way, the posterior variances satisfy
\begin{equation}
 \mathrm{Var}[\bm{\beta}|\{\bm{XD}\}_{\bm{y}}]
=\bm{D}^{-1}\mathrm{Var}[\bm{\beta}|\bm{X}_{\bm{y}}]\bm{D}^{-1}.
\end{equation}
\end{remark}
\begin{remark}
We briefly remark that a similar direct MC sampler is possible 
for Normal priors for $\bm{\beta}$.
Let $\bm{\beta}\sim N_p(\bm{0},\bm{Q})$.
The main difference comes from completing squares corresponding to
\eqref{complete_squares} as
\begin{equation}
\begin{split}
& \bm{\beta}'\bm{Q}^{-1}\bm{\beta}+ \|s^{1/2}\bm{h}-\bm{X}_{\bm{y}}\bm{\beta}\|^2 \\
&\qquad =
s\bm{h}'\bm{\Psi}[\bm{X}_{\bm{y}},\bm{Q}]\bm{h}
+(\bm{\beta}-\hat{\bm{\beta}})'
(\bm{X}'\bm{X}+\bm{Q}^{-1})
(\bm{\beta}-\hat{\bm{\beta}}),
\end{split}
\end{equation}
where
\begin{equation*}
 \bm{\Psi}[\bm{X}_{\bm{y}},\bm{Q}]=\bm{I}_n-\bm{X}_{\bm{y}}(\bm{X}'\bm{X}+\bm{Q}^{-1})^{-1}\bm{X}'_{\bm{y}}
\end{equation*}
\begin{equation*}
 \hat{\bm{\beta}}=s^{1/2}(\bm{X}'\bm{X}+\bm{Q}^{-1})^{-1}\bm{X}'_{\bm{y}}\bm{h}.
\end{equation*}
Hence there is a corresponding hierarchical structure to \eqref{mc_step_0} 
of Theorem \ref{thm:main},
which is given by
\begin{equation*}
\begin{split}
 \pi(\bm{\beta}|s,\bm{u},\bm{y})&=N_p((\bm{X}'\bm{X}+\bm{Q}^{-1})^{-1}\bm{X}'_{\bm{y}}s^{1/2}\bm{u},
(\bm{X}'\bm{X}+\bm{Q}^{-1})^{-1}),\\
\pi(s|\bm{u},\bm{y}) &=(\bm{u}'\bm{\Psi}[\bm{X}_{\bm{y}},\bm{Q}]\bm{u})^{-1}\chi_n^2, \\
\pi(\bm{u}|\bm{y})& \propto (\bm{u}'\bm{\Psi}[\bm{X}_{\bm{y}},\bm{Q}]\bm{u})^{-n/2}\text{  on }\mathcal{S}_+^n.
\end{split}
\end{equation*}
Since the prior is proper, the posterior is always proper even if 
there  exists $\bm{\alpha}\in\mathcal{R}^p$ such that $\bm{X_{\bm{y}}}\bm{\alpha}\in \mathcal{R}_+^n$. 
\end{remark}

\bigskip

The lemma below is related to the regularity condition of Theorem \ref{thm:main}.
\begin{lemma}\label{lem:equiv}
 The necessary and sufficient condition for the existence of
$\bm{\alpha}\in\mathcal{R}^p$ such that $\bm{X_{\bm{y}}}\bm{\alpha}\in \mathcal{R}_+^n$ is
that the existence of $\bm{u}\in\mathcal{S}^n_+$ 
such that $\bm{\Psi}[\bm{X}_{\bm{y}}]\bm{u}=\bm{0}$.
\end{lemma}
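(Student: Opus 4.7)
The plan turns on the elementary geometric fact that, since $\bm{\Psi}[\bm{X}_{\bm{y}}]$ is the projector onto the orthogonal complement of the column space of $\bm{X}_{\bm{y}}$, the condition $\bm{\Psi}[\bm{X}_{\bm{y}}]\bm{u}=\bm{0}$ is equivalent to $\bm{u}\in\mathrm{col}(\bm{X}_{\bm{y}})$, i.e.\ $\bm{u}=\bm{X}_{\bm{y}}\bm{\gamma}$ for some $\bm{\gamma}\in\mathcal{R}^p$. With this reformulation, the claim becomes a statement about whether the column space of $\bm{X}_{\bm{y}}$ meets the positive orthant versus meeting the nonnegative unit sphere, and each implication reduces to a normalization (plus, for one direction, a boundary-versus-interior upgrade).

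For necessity, I would start from $\bm{\alpha}\in\mathcal{R}^p$ with $\bm{v}:=\bm{X}_{\bm{y}}\bm{\alpha}\in\mathcal{R}^n_+$. Strict positivity of every coordinate gives $\bm{v}\neq\bm{0}$, so that $\bm{u}:=\bm{v}/\|\bm{v}\|$ is a unit vector with nonnegative (indeed strictly positive) entries, whence $\bm{u}\in\mathcal{S}_+^n$. Since $\bm{u}=\bm{X}_{\bm{y}}(\bm{\alpha}/\|\bm{v}\|)$ is in $\mathrm{col}(\bm{X}_{\bm{y}})$, we get $\bm{\Psi}[\bm{X}_{\bm{y}}]\bm{u}=\bm{0}$. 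This direction is immediate.

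For sufficiency, I would take $\bm{u}\in\mathcal{S}_+^n$ with $\bm{\Psi}[\bm{X}_{\bm{y}}]\bm{u}=\bm{0}$, write $\bm{u}=\bm{X}_{\bm{y}}\bm{\gamma}$, and propose $\bm{\alpha}=\bm{\gamma}$, which already satisfies $\bm{X}_{\bm{y}}\bm{\alpha}=\bm{u}\geq \bm{0}$ componentwise with $\|\bm{u}\|=1$. The delicate point is that $\mathcal{S}_+^n$ allows some coordinates of $\bm{u}$ to vanish, while $\mathcal{R}^n_+$ requires strict positivity in every coordinate; bridging weak to strict positivity is the main obstacle.

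To close that gap I would invoke a theorem of the alternative (Gordan/Stiemke) applied to $\bm{X}_{\bm{y}}$: using that $\bm{X}_{\bm{y}}$ inherits full column rank $p$ from $\bm{X}$ and that $\bm{u}$ is a nonzero element of $\mathrm{col}(\bm{X}_{\bm{y}})\cap\{\bm{v}\geq\bm{0}\}$, one argues that for the index set $S=\{i:u_i=0\}$ there must exist $\bm{\gamma}'\in\mathcal{R}^p$ with $(\bm{X}_{\bm{y}}\bm{\gamma}')_i>0$ for $i\in S$ (otherwise $S$-coordinates identically vanish on the column space, contradicting the dual-cone structure that made $\bm{u}$ a boundary point in the first place). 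Then $\bm{\alpha}=\bm{\gamma}+\epsilon\bm{\gamma}'$ with $\epsilon>0$ sufficiently small makes every coordinate of $\bm{X}_{\bm{y}}\bm{\alpha}$ strictly positive, completing the proof.
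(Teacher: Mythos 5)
Your necessity direction is correct and is essentially the paper's argument (the paper even skips the normalization you spell out), and your choice $\bm{\alpha}=\bm{\gamma}$ in the converse also matches the paper, which takes $\bm{\alpha}=(\bm{X}'\bm{X})^{-1}\bm{X}'_{\bm{y}}\bm{u}$ so that $\bm{X}_{\bm{y}}\bm{\alpha}=\bm{u}$ and stops there. The problem is your final ``weak-to-strict'' upgrade: the claim that for $S=\{i:u_i=0\}$ there must exist $\bm{\gamma}'$ with $(\bm{X}_{\bm{y}}\bm{\gamma}')_i>0$ for all $i\in S$ is false in general, and your justification misstates the Gordan alternative. The correct alternative to the existence of such a $\bm{\gamma}'$ is not that the $S$-coordinates vanish identically on the column space, but that some nonzero nonnegative combination of the $S$-rows of $\bm{X}_{\bm{y}}$ equals $\bm{0}$, and that can certainly happen. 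Concretely, take $n=3$, $p=2$, $\bm{X}$ with rows $(1,0),(0,1),(0,1)$ (full column rank) and $\bm{y}=(1,1,0)'$, so $\bm{X}_{\bm{y}}$ has rows $(1,0),(0,1),(0,-1)$. Then $\bm{u}=(1,0,0)'\in\mathcal{S}_+^3$ lies in the column space, hence $\bm{\Psi}[\bm{X}_{\bm{y}}]\bm{u}=\bm{0}$ and $S=\{2,3\}$; but every element of the column space has the form $(a,b,-b)'$, so no $\bm{\gamma}'$ makes coordinates $2$ and $3$ simultaneously positive (here $(0,1)+(0,-1)=\bm{0}$ is the Gordan certificate), and indeed no $\bm{\alpha}$ whatsoever gives $\bm{X}_{\bm{y}}\bm{\alpha}$ with all entries strictly positive.

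This example shows the gap is not one of technique: if $\mathcal{R}_+^n$ is read literally as the open orthant while $\mathcal{S}_+^n$ allows zero coordinates, the equivalence you are trying to salvage simply fails, so no theorem of the alternative can close it. The paper's own two-line proof implicitly reads the two sets with matching inequalities (it declares $\bm{u}\in\mathcal{R}_+^n$ without addressing possible zero coordinates), and under that consistent reading (both weak, or both strict, with the obvious normalization) the statement is immediate and your extra machinery is unnecessary. So: keep your first paragraph and the identification $\bm{u}=\bm{X}_{\bm{y}}\bm{\gamma}$, but drop the Gordan/Stiemke step; either state and prove the lemma with the closed orthant (equivalently, note that boundary cases such as the example above --- quasi-complete separation --- are exactly what the propriety condition is meant to capture), or with the strictly positive part of the sphere, rather than attempting to bridge strict and weak positivity.
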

\begin{proof}
 Suppose there exists 
 $\bm{\alpha}\in\mathcal{R}^p$ such that $\bm{X\alpha}\in \mathcal{R}_+^n$.
Let $\bm{u}=\bm{X\alpha}$. 
Then $ \bm{u}$ satisfies $\bm{\Psi}[\bm{X}_{\bm{y}}]\bm{u}=\bm{0}$.

Suppose there exists $\bm{u}\in\mathcal{S}^n_+$ 
such that $\bm{\Psi}[\bm{X}_{\bm{y}}]\bm{u}=\bm{0}$. 
Let $\bm{\alpha}=(\bm{X}'\bm{X})^{-1}\bm{X}'_{\bm{y}}\bm{u}$.
Then $\bm{X\alpha}=\bm{X}_{\bm{y}}(\bm{X}'\bm{X})^{-1}\bm{X}'_{\bm{y}}\bm{u}=\bm{u}$ is in
$\mathcal{R}_+^n$.
\end{proof}

\end{document}